\begin{document}

\title{Scalable and Secure Aggregation in Distributed Networks}

\author{S\'ebastien Gambs$^1$\and Rachid Guerraoui$^2$\and Hamza Harkous$^2$\\ Florian Huc$^2$ \and Anne-Marie Kermarrec$^3$\\
$^1$Universit\'e de Rennes 1 -- INRIA/IRISA\\
$^2$\'Ecole Polytechnique F\'ed\'erale de Lausanne (EPFL)\\
$^3$INRIA Rennes Bretagne-Atlantique\\
} 
\institute{}
\maketitle

\begin{abstract}
We consider the problem of computing an aggregation function in a \emph{secure} and \emph{scalable} way. Whereas previous distributed solutions with similar security guarantees  have a communication cost of $O(n^3)$, we present a distributed protocol that requires only a communication complexity of $O(n\log^3 n)$, which we prove is near-optimal. Our protocol ensures perfect security against a computationally-bounded adversary, tolerates $(\frac{1}{2}-\epsilon)n$ malicious nodes for any constant $\frac{1}{2} > \epsilon > 0$ (not depending on $n$), and outputs the exact value of the aggregated function with high probability.
\end{abstract}

\section{Introduction}

\emph{Aggregation functions} are a specific class of functions that
compute a global function from the local values held by nodes in a
distributed system.  Examples of aggregation functions include simple
functions such as the average computation but also more sophisticated ones
such as an election with multiple candidates.  Such functions are
particularly important in large-scale systems in which they are typically used to
compute global system properties  (e.g. for monitoring purposes).
While such computations may be achieved through a trusted central entity gathering
all inputs \cite{BFP+01}, distributed variants are appealing for
scalability and privacy reasons.

In this paper, we address simultaneously three issues related to computing
an aggregation function in a distributed fashion, namely
correctness, privacy and scalability.  Indeed, the problem of
computing an aggregation function is a
specific instance of the much broader problem of secure multiparty
computation. Generic constructions
can thus be used \cite{Goldreich1987,Ben-Or1988} for solving the problem
while tolerating up to $n/2-1$ malicious nodes, where $n$ is the number of
nodes in the network. However, these constructions are often expensive
with a global communication cost that is quadratic or cubic in the
number of nodes in the network \cite{BeerliovaTrubiniova2008},
In addition, most of them assume the existence of a broadcast
 channel, which is rarely available in large scale
 networks. Simulating such a channel is possible but has a
 communication cost of $\Omega(n^2)$.

Ideally, we would like the communication complexity to compute an
aggregation function to be linear in the number of nodes. The
fundamental question is whether an algorithm for computing an
aggregation function with a communication cost of $O(n)$ in a secure
and accurate way in  the presence of a constant fraction of malicious nodes
exists. Clearly, this is  impossible to achieve with
certainty. Indeed, to be certain that a message sent by a node is not
altered by a collusion of malicious nodes, it needs to be sent at
least $m+1$ times, where $m$ is the number of malicious nodes
(otherwise the malicious nodes could simply drop the message). Such an
algorithm would induce $O(nm)=O(n^2)$ messages when
$m=O(n)$. Therefore, instead of seeking certainty, we investigate
algorithms that output the exact value of the aggregation function
\textit{with high probability (whp)}. In fact, our first contribution is to
prove a lower bound stating that at least $\Omega(n\log n)$ messages
are required to compute any multiparty function (not only aggregation)
in an accurate way and with high probability.
The lower bound proof leverages a strategy that consists for the adversary to control all the nodes to which a specific node sends its messages, whilst honest nodes hide as much
as possible the selection of nodes that will treat their votes.

Our second contribution is a near-optimal distributed
protocol for computing the output of any aggregation function
while protecting the privacy of individual inputs, and tolerating up to
$(\frac{1}{2} - \epsilon)n$ malicious nodes, for $n$ the number of nodes in
the network and $\epsilon>0$ any constant independent of $n$. 
 This protocol outputs the exact value of the aggregation function with high
probability as $n$ tends to infinity, has a global communication
cost of $O(n\log^3 n)$, and is balanced. It is based on a distributed version of the protocol presented in \cite{Awerbuch2009} to build an overlay of clusters of size $O(\log n)$, such that each of them contains a majority of
honest nodes (the construction of such a layout is a necessary condition to ensure the accuracy of our protocol.). It uses this
layout to compute the aggregation function using existing
cryptographic tools at the level of the clusters, that would otherwise
be too expensive (in terms of communication complexity) to run at the
level of the network.

We support our theoretical evaluations with an implementation of a distributed polling protocol, based on our aggregation protocol, on the Emulab testbed \cite{White+:osdi02}. We show its communication, computation, and time efficiency compared to a non-layout based secure aggregation protocol.

The rest of the paper is organized as follows.
In Section \ref{sec:model},
we discuss the model and related work. Afterwards, we prove a lower
bound on the minimal communication complexity of computing an
aggregation function in Section \ref{sec:impossibility}. The protocol
is described and analyzed in Section \ref{sec:protocol}.  Finally, we report on
the results of an experimental evaluation of the protocol in Section
\ref{exp_evaluation} and we conclude in Section~\ref{conclu}.


\section{Preliminaries and Background}
\label{sec:model}

\subsection{Model}
We consider a dynamic and distributed system composed of nodes that can join and leave at any time. We assume that the number of nodes in the system is always linear in a parameter $n$. Furthermore, some of the nodes (that we refer to hereafter as \emph{malicious}) might display an undesirable behaviour and cheat during the execution of the protocol in order to learn information of honest nodes, to perturb the functionality of the system, or simply to make it crash. We model these malicious nodes through a static and active adversary controlling and coordinating them (see \cite{Goldreich2001} for a formal definition). At any point in time, the adversary has a complete knowledge of the structure of the system while an honest node has only a local knowledge of its neighbourhood. When a node joins the network, this adversary can corrupt it to make it a malicious node, otherwise the node would be called honest. Of course, during the execution of a protocol, an honest node has no idea whether or not it is currently interacting with an honest node or a malicious node.
We assume that $\tau$, the fraction of malicious nodes controlled by the adversary, is less than $\frac{1}{2}-\epsilon$ for some constant $\frac{1}{2} > \epsilon>0$ independent of~$n$. In order to construct the overlay of the system, we rely on a distributed version of the protocol proposed in \cite{Awerbuch2009}. The analysis of this protocol has been conducted in the following setting: first, only the honest nodes are present in the network and there has been enough leave and join operations from these honest nodes, then the adversary has the malicious nodes join the network. 
After this initialisation phase, the nodes can leave and join arbitrarily, but still while conserving a maximum fraction $\tau$ malicious nodes in the system.

\begin{remark}
It has been proved that the protocol of \cite{Awerbuch2009} guarantees to maintain a partition of the nodes into clusters of size $\Omega(\log n)$, such that in each cluster, there is a majority of honest nodes under the hypothesis that the fraction of malicious nodes controlled by the adversary, is less than $\frac{1}{2}-\epsilon$ and that the adversary cannot force an honest node to leave the system. The partition can be made further resilient to other attacks (particularly denial-of-service attacks which result in honest nodes leaving the system), by using the protocol proposed in \cite{scheideler}. However, in this case, the fraction of malicious nodes controlled by the adversary, is required to be less than $\frac{1}{5}-\epsilon$. The results presented in this paper can be straightforwardly adapted to this latter situation.
\end{remark}

We further assume that each node has a
pair of secret/public keys for signature that is assigned to it by a (trusted) Certification Authority
(CA). Note that this CA is dedicated to key management and cannot be used to achieve other computations. Some decentralized implementations of such a CA are possible but they are out of the scope of this paper. More
precisely, when a new node joins the network, it receives a certificate signed by the CA that contains its public key and that can be shown to other nodes, as well as the corresponding private key. 
As the public key of a node is unique and chosen at random, it can be considered as being the identifier of this node. We also assume that all nodes in the system can communicate via pairwise secure channels (these secure channels can be obtained for instance via a public key
infrastructure or a key generation protocol), which means that all
the communications exchanged between two nodes are authenticated and confidential from the point of view of an external eavesdropper. 

Periodically, the nodes currently within the system compute, in a distributed manner,
an aggregation function $f$ that depends on individual inputs of
the nodes, $x_1, \ldots, x_n$, where $x_i$ is taken from a set of $k$
different possible values (i.e., $x_i \in
\{\nu_1,\dots,\nu_k\}$). This aggregation function can be used for
instance to compute a distributed polling or any global property of
the network that can be obtained by a linear combination of the local
inputs. In the rest of the paper, we assume that these values
correspond to integers. A simple example of such a function is the
computation of the average (or the sum) of these values (i.e., $f(x_1,
\ldots, x_n)=\frac{1}{n} \sum_{i=1}^n x_i$).

Due to the presence of malicious nodes, the computation needs to be achieved in a secure way, in the sense that it should offer some guarantees on the privacy (Definition~\ref{def:privacy-active-adversary}) of local inputs of honest nodes and on the correctness of the output, and this against any actions that the active adversary might do.

\begin{definition}[Privacy -- active adversary \cite{Goldreich2001}]
\label{def:privacy-active-adversary}
A distributed protocol is said to be \emph{private with respect to an active adversary} controlling a fraction $\tau$ of nodes, if this adversary cannot learn (except with negligible probability) more
information from the execution of the protocol than it could from its own input (i.e., the inputs of the nodes he controls) and the output of the protocol.
\end{definition}

Privacy can either be \emph{information-theoretic} (unconditional), if an adversary with unlimited computational power cannot break the privacy of the inputs of honest nodes, or \emph{cryptographic}, if it only holds against a computationally-bounded adversary that does not have enough computing resources to break a cryptographic assumption (such as factorising the product of two big prime numbers or solving the discrete logarithm problem). 
Furthermore, we aim at ensuring correctness (Definition~\ref{def:correctness-active-adversary}), even if the malicious nodes controlled by the adversary misbehave.

\begin{definition}[Correctness -- active adversary]
\label{def:correctness-active-adversary}
A distributed protocol is said to be \emph{correct with respect to an active adversary} controlling a fraction $\tau$ of the nodes, if the output of the protocol 
is guaranteed to be exact with high probability. 
\end{definition}


However,  malicious nodes cannot be denied from choosing their own inputs as they want as long as these inputs are valid (i.e. they are within the range of the possible ones).  Neither can we avoid the situation in which a malicious node simply refuses to participate to the protocol, in which case the protocol should be run without taking into account this node. If a malicious node leaves the protocol \emph{during} its execution then the protocol should be robust enough to carry out the computation instead of simply aborting. This corresponds to a type of denial-of-service attack on the global functionality computed by the protocol. 

Moreover, we seek \emph{scalable} protocols whose global communication cost and computational complexity is as close as possible to the lower bound $\Omega(n\log n)$ that we prove in Section \ref{sec:impossibility}. 
The term scalable reflects the property of a protocol to remain efficient when the size of the system increases, and hence its suitability to be implemented in large scale systems.
As the notion of scalability imposes a bound on the communication and computational complexity, it has an impact on the type of cryptographic techniques that can be used. For instance as mentioned in Section \ref{sec:relatedwork}, general results in secure multiparty computation can be used to compute any distributed function in a secure manner as long as the number of malicious participants is less than $n/3$ when aiming for information-theoretic security \cite{Chaum1988} or less $n/2$ for achieving cryptographic security \cite{Goldreich1987}. However, these protocols have a communication and computational complexities that are at least quadratic or cubic in the number of nodes $n$. 

Apart from scalability, we also aim at achieving a \emph{balanced} protocol (Definition~\ref{def:fair}) in which each node receives and sends approximately the same quantity of information.

\begin{definition}[$(C_{in},C_{out})$-balanced]
\label{def:fair}
A distributed protocol among $n$ nodes whose communication complexity is $C_{total}$ is said to be \emph{$(C_{in},C_{out})$-balanced}, if each node sends $O(C_{in}C_{total}/n)$ and receives $O(C_{out}C_{total}/n)$ bits of information.
\end{definition}

For instance, in a $(1,1)$-balanced protocol, each node sends and receives the same number of bits (up to a constant factor), whereas in a $(n,n)$-balanced protocol a single node may do all the work. 
In this paper, we will conisder as \emph{balanced} a protocol that is $(Poly(\log n),Poly(\log n))$-balanced.

\subsection{Homomorphic encryption}

In our work, we rely on a cryptographic primitive known as \emph{homomorphic encryption}, which allows to perform arithmetic operations (such as addition and/or multiplication) on encrypted values, thus protecting the privacy of the inputs of honest nodes.

\begin{definition}[Homomorphic cryptosystem]
\label{def:hom-enc}
Consider a public-key (asymmetric) cryptosystem where (1) $\mathsf{Enc}_{pk}(a)$ denotes the encryption of the message $a$ under the public key $pk$ and (2) $\mathsf{Dec}_{sk}(a)=a$ is the decryption of this message with the secret key $sk$. (In order to simplify the notation, we drop the indices and write $\mathsf{Enc}(a)$ instead of $\mathsf{Enc}_{pk}(a)$ and $\mathsf{Dec}(a)$ instead of $\mathsf{Dec}_{sk}(a)$ for the rest of the paper.) A cryptosystem is \emph{additively} homomorphic if there is an efficient operation $\oplus$ on two encrypted messages such that $\mathsf{Dec}(\mathsf{Enc}(a)\oplus\mathsf{Enc}(b)) = a + b$. Moreover, such an encryption scheme is called \emph{affine} if there is also an efficient scaling operation $\odot$ taking as input a cipher-text and a plain-text, such that $\mathsf{Dec}(\mathsf{Enc}(c) \odot a)=c \times a$.
\end{definition}

Paillier's cryptosystem \cite{Paillier1999} is an instance of a homomorphic encryption scheme that is both additive and affine. Moreover, Paillier's cryptosystem is also \emph{semantically secure} \cite{Goldreich2001}, which means that a computationally-bounded adversary cannot derive non-trivial information about the plain text $m$ encrypted from the cipher text $\mathsf{Enc}(m)$ and the public key $pk$.

\begin{definition}[Semantic security \cite{Goldreich2001}]
 An encryption scheme is said to be \emph{semantically secure} if a computationally-bounded adversary cannot derive non-trivial information about the plain text $m$ encrypted from the cipher text $\mathsf{Enc}(m)$ and the public key $pk$.
\end{definition}

For instance, a computationally-bounded adversary who is given two different cipher texts encrypted with the same key of a semantic cryptosystem cannot even decide with non-negligible probability if the two cipher texts correspond to the encryption of the same plain text or not. This is because a semantically secure cryptosystem is by essence \emph{probabilistic}, meaning that even if the same message is encrypted twice, the two resulting cipher-texts will be different with high probability. In practice, the semantic security is achieved by injecting in each possible operation of the cryptosystem (encryption, addition and multiplication) some fresh randomness. However, if two participants agree in advance to use the same seed for randomness, this renders the encryption scheme deterministic. For instance, if two different participants each add together two encrypted values, $\mathsf{Enc}(m_1)$ and $\mathsf{Enc}(m_2)$, using the addition operation of the cryptosystem, this normally results in two different cipher-texts (but still corresponding to the same plain-text), \emph{unless} the participants agree in advance to use the same randomness. 

In this paper, we use a \emph{threshold version} of the Paillier's cryptosystem due to D{\aa}mgard and Jurik
\cite{DamgardJ01}.

\begin{definition}[Threshold cryptosystem]
\label{def:threshold-cryptosystem}
A $(t,n)$ \emph{threshold cryptosystem} is a public cryptosystem where at least $t>1$ nodes out of $n$ need to actively cooperate in order to decrypt an encrypted message. In particular, no collusion of even $(t-1)$ nodes can decrypt a cipher text. However, any node may encrypt a value on its own using the public-key $pk$. After the threshold cryptosystem has been set up, each node $i$ gets as a result its own secret key $sk_i$ (for $1\leq i \leq n$).
\end{definition}

\subsection{Related Work -- Secure Multiparty Computation}
\label{sec:relatedwork}

The main goal of secure multiparty computation is to allow participants to compute in a distributed manner a joint function over their inputs while at the same time protecting the privacy of these inputs. This problem was first introduced in the bipartite setting in \cite{Yao1982} and as since became one of the most active fields of cryptography. Since the seminal paper of Yao, generic constructions have been developed for the multiparty setting that can be used to compute securely (in the cryptographic sense) any distributed function as long as the number of malicious nodes is strictly less than $n/2$ \cite{Goldreich1987}. For some functions, it has been proven that these protocols are optimal with respect to the number of malicious nodes that can be tolerated. Subsequently, complementary work \cite{CK91} has shown that the only functions that can be computed even when a majority of nodes are malicious (up to $n-1$) are those that consist in a XOR-combination of $n$ Boolean functions. In the information-theoretic setting, generic results for secure multiparty computation also exist that can tolerate up to $n/3$ malicious nodes without relying on any cryptographic assumptions \cite{Ben-Or1988,Chaum1988}. 
All these generic constructions have constant overheads (independent of the function to be computed), in terms of communication and computation, that are at least quadratic or cubic in the number of nodes \cite{BeerliovaTrubiniova2008}. 
However, for solving specific tasks it is often possible to develop more efficient protocols.
Furthermore, contrary to us, some of these constructions assume the availability of a secure broadcast channel, i.e it is assumed that a broadcast can be performed for a communication cost which is the size of the message. 
Recently, \cite{Hirt2010} has proposed a protocol for simulating such a channel that can tolerate up to $n/2$ malicious nodes with a global communication cost of $\Theta(n^2)$. We use this protocol as a subroutine in our approach to broadcast messages to a small set of nodes. Notice that this broadcast protocol is proved to be secure against an adaptive adversary, which is a stronger adversary than the one we consider (the static one). Therefore, one could argue that using this protocol is too strong and that we should rely on a simpler one.  


Assuming a particular structure of the underlying network can sometimes lead to efficiency gains. For instance, a recent protocol \cite{Ben-David2008} achieves the secure multiparty computation of any function in at most $8$ rounds of communication. This protocol uses a cluster composed of $m$ nodes (for $m<n$) to simulate a trusted central entity, has a global communication cost of $O(m^2 + nm)$ and can tolerate up to $m/2$ honest-but-curious nodes (a weaker form of adversary model in which nodes do not cheat actively but can share their data in order to infer new information). This protocol can be used to compute privately an aggregation function as long as the number of honest-but-curious nodes is small (i.e. $O(\log n)$) but does not aim at providing security against malicious nodes as we do. 
A related technique, frequently used in the context of scalable distributed consensus, is \emph{universe reduction}, 
which consists in selecting a small subset of the $n$ nodes, among which the fraction of malicious nodes is similar to the fraction of malicious nodes in the entire system with high probability. This subset of nodes can run expensive protocols and disseminate the results to the rest of the network. Up to our knowledge, no solution for this problem with a sub-quadratic global communication complexity has been found in presence of an adversary controlling a constant fraction of the nodes, without any simplifying assumptions. Feige's lightest bin protocol \cite{Feige1999} solves the problem when all node have access to a broadcast channel, and it has been adapted to the message passing model by \cite{King2006}, with a complexity polynomial in $n$, the original network size. Another solution proposed in \cite{Ben-Or2006} has a quasi-polynomial communication complexity (i.e. the total number of bits sent by honest nodes is $O(n^{\log n})$). Other protocols by \cite{Kapron2008} and \cite{King2010} claim sub-quadratic communication complexity against a malicious adversary controling less than $n/(6+\epsilon)$ (for some constant $\epsilon > 0$) and $n/3$ nodes respectively, 
but they assume that all nodes know all the identities of all the other nodes in the system, which in itself hides an $\Omega (n^2)$ communication complexity to propagate this information. 

Aggregation functions also have a clear connection with election problems. In particular for these problems, it is of paramount importance to protect the privacy of voters. In \cite{BFP+01}, a protocol was proposed that computes the outcome of an electronic election while providing cryptographic security for a global communication cost of $O(n)$. However contrary to our approach, this protocol requires the availability of a trusted entity. Another protocol \cite{Juels2010} achieves the property of coercion-resistance, i.e. a voter cannot produce any proof of the value of its vote, thus preventing any possibility to force or buy its vote. This protocol has a complexity of $O(n^2)$, where $n$ is the number of voters, which (quoting the authors) is not practical for large scale elections. Finally, our work is to be compared with \cite{Giurgiu2010}, in which the authors propose a protocol computing an $\sqrt{n}$-approximation of an aggregation function even in the presence $\sqrt{n}/\log n$ rational nodes (a slightly weaker form of adversary than malicious nodes) with a global communication cost of $O(n^{3/2})$ but without relying on cryptographic assumptions.

\section{Lower Bound}
\label{sec:impossibility}

In this section, we prove that no balanced algorithm can compute an aggregation function with a global communication complexity $o(n\log n)$ provided that the number of malicious nodes is linear in the number of nodes in the system.

\begin{theorem}\label{thm:low}
Given a protocol that computes a function in a distributed manner, whose inputs are held by $n$ nodes and among which  $\epsilon n$ are malicious for some positive constant $\epsilon<1$ (independent of $n$). If a fraction $cn$ of the nodes send no more than $\omega^+(n)$ messages (for some $\omega^+(n)=o(\log n)$
and constant $1>c>0$) and that no node receives more than $\omega^-(n)$ messages with $\omega^+(n)e^{\omega^+(n)}\omega^-(n) = o(n)$, then with high probability (in $n$) there is a node whose messages are all intercepted by malicious nodes.
\end{theorem}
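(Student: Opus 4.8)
We use a random static corruption and then an averaging argument to pin down a fixed malicious set. The plan is to let the adversary corrupt a uniformly random set $M$ of $\epsilon n$ nodes and show that, with high probability over $M$ and the protocol's coins, some ``light sender'' has \emph{all} of its out-neighbours in $M$; averaging over $M$ then fixes a single static corruption set that works. For this bound we may let the malicious nodes follow the honest code up to the moment of interception, so that the communication pattern -- in particular, for each node $v$ the set $R_v$ of distinct recipients of $v$'s messages -- is determined by the all-honest execution and is independent of who is malicious. Let $L$ be the set of $\ge cn$ nodes with $|R_v|\le\omega^+(n)$; nodes that send no message at all are vacuously ``all intercepted'', so we may assume each $v\in L$ sends at least one. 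The hypothesis that no node receives more than $\omega^-(n)$ messages says precisely that every node $u$ lies in at most $\omega^-(n)$ of the sets $R_w$.

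The combinatorial heart is to extract from $L$ a subfamily whose recipient sets are \emph{pairwise disjoint}, which makes the interception events (essentially) independent. I would do this greedily: repeatedly pick a surviving $v\in L$, place it in $L'$, and discard every $w\in L$ with $R_w\cap R_v\neq\emptyset$. Since $|R_v|\le\omega^+(n)$ and each element of $R_v$ belongs to at most $\omega^-(n)$ of the $R_w$, each round discards at most $\omega^+(n)\,\omega^-(n)$ nodes, so
\[
|L'|\ \ge\ \frac{cn}{1+\omega^+(n)\,\omega^-(n)}\ =\ \Omega\!\left(\frac{n}{\omega^+(n)\,\omega^-(n)}\right),
\]
yielding that many light senders with pairwise disjoint out-neighbourhoods.

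Now take the static adversary to corrupt a uniformly random $\epsilon n$-subset $M$. For $v\in L'$ we have $\Pr[R_v\subseteq M]=\binom{n-|R_v|}{\epsilon n-|R_v|}\big/\binom{n}{\epsilon n}\ge\tfrac12\,\epsilon^{\omega^+(n)}$ for large $n$ (using $|R_v|\le\omega^+(n)=o(\log n)$), and since $\{R_v\}_{v\in L'}$ are pairwise disjoint these events are negatively associated, so
\[
\Pr[\text{no }v\in L'\text{ is intercepted}]\ \le\ \prod_{v\in L'}\!\big(1-\Pr[R_v\subseteq M]\big)\ \le\ \exp\!\Big(-\tfrac12\,|L'|\,\epsilon^{\omega^+(n)}\Big).
\]
Finally $|L'|\,\epsilon^{\omega^+(n)}=\Omega\big(n\big/(\omega^+(n)\,\omega^-(n)\,(1/\epsilon)^{\omega^+(n)})\big)$, and writing $(1/\epsilon)^{\omega^+(n)}=e^{\Theta(\omega^+(n))}$ (a constant base $\le 1/\epsilon$) this diverges by the hypothesis $\omega^+(n)\,e^{\omega^+(n)}\,\omega^-(n)=o(n)$. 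Hence the failure probability is $o(1)$ (indeed polynomially small when the left-hand side of the hypothesis is polynomially below $n$), and averaging over $M$ gives a fixed static set of at most $\epsilon n$ malicious nodes for which, whp over the protocol's randomness, some node has all of its messages intercepted.

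I expect the main obstacle to be exactly this independence issue: overlapping recipient sets make the interception events positively correlated, which pushes a naive union/Chernoff estimate the wrong way and (e.g.\ via Paley--Zygmund) only yields a constant success probability rather than ``with high probability''. The greedy packing repairs this at the unavoidable cost of a factor $\omega^+(n)\,\omega^-(n)$, which is precisely why the hypothesis has to take the form $\omega^+e^{\omega^+}\omega^-=o(n)$. A secondary point to get right is the reduction to the all-honest execution -- legitimate because a static adversary's interception pattern depends only on that execution -- and keeping $\epsilon^{\omega^+(n)}=n^{-o(1)}$, which is where $\omega^+(n)=o(\log n)$ is used.
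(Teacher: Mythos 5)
Your proof is correct and follows essentially the same route as the paper's: greedily extract $\Omega\bigl(n/(\omega^+(n)\,\omega^-(n))\bigr)$ light senders with pairwise disjoint recipient sets, lower-bound the probability that one such set is entirely malicious by roughly $\epsilon^{\omega^+(n)}$ per set, and take the product over the disjoint family. The only difference is the bookkeeping for independence --- you randomize the corruption set, invoke negative association for disjoint subsets of a uniformly random $\epsilon n$-subset, and average back to a fixed static adversary, whereas the paper keeps the adversary fixed, randomizes the placement of the recipient sets, and manufactures independence via a two-stage binomial/Bernoulli sampling; your treatment is the cleaner of the two, and both share the same mild imprecision that the exponential base appearing in the hypothesis should really be $1/\epsilon$ rather than $e$.
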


\begin{proof}
If the structure of the overlay is fixed using deterministic arguments before the adversary chooses which nodes he corrupts, he can always target a particular node by corrupting the $\omega^+(n)$ nodes with whom this honest node will communicate. Similarly, if the set of nodes receiving all the messages of a given node are not chosen uniformly at random, the adversary may possibly use this bias to increase the chances that all the node in the determined set are malicious.

The best strategy for a node to avoid to send all its messages to malicious nodes, is to choose the recipients (we call recipient for $x$, a node that receives a message from node $x$) of its messages uniformly at random.
The objective of our proof is to show that even under this strategy, there is at least a node that will send all its messages to malicious ones. For this, we consider several disjoint sets, each containing all the recipients of a particular node. 

To find such disjoint sets of nodes, we proceed in a greedy way. Given a node $x_1$ that sends less than $\omega(n)$ messages, we set $S_1$ to be the set of recipients for $x_1$. By assumption, this set intersects at most $\omega^-(n)\omega^+(n)$ sets of recipients of other nodes. We then proceed recursively by discarding all the nodes sending their messages to these sets. At the end of the process, we obtain $k=cn/(\omega^-(n)\omega^+(n))$ nodes that send all their messages to disjoints sets $S_1,...,S_{k}$.

We now want to prove that with high probability, one of these sets is composed exclusively of malicious nodes. The adversary can choose arbitrarily the nodes he want to corrupt to be malicious. However, we assume that once its choice is made, it is fixed and the adversary cannot change it later. 
Recall that the sets $S_i$, $1\leq i \leq k$ have been chosen at random. Given the set $S_i$, we denote by $S=\cup_{i=1}^k S_i$ and $m_S$ the number of malicious nodes in $S$, the two following processes generate the same probability distribution:
\begin{enumerate}
\item Choose $\epsilon n$ malicious nodes, and then take disjoint sets $S_1,...,S_{k}$ among the $n$ possible ones.
\item Draw $m_S$, the number of malicious nodes in $S$, according to the binomial distribution $Bi(\epsilon,|S|)$, and then choose at random $m_S$ malicious nodes from $S$.
\end{enumerate}

Note that \emph{we do NOT assume that the malicious nodes are chosen at random}. Rather we state that if two processes lead to the same probability distribution,
 proving a result for one of the two processes implies the same result for the other process. Let us recall that the malicious nodes are corrupted by the adversary at the time they join the network.

When choosing $m_S$ malicious nodes from $S$, we have a non-independent probability for a node to be malicious or not, whereas, if we had independence, the proof would be easy.
Therefore our objective is now to introduce independence. Notice that choosing $m_S$ nodes at random in $S$ gives the same distribution as first choosing $m_S'<m_S$ nodes at random, and then $m_S-m_S'$ other nodes at random. Moreover, choosing $m_S'$ nodes at random has the same probability distribution as the process of choosing malicious nodes such that each node is malicious with probability $m_S'/S$, knowing that exactly $m_S'$ nodes are chosen. 

To summarize, the following process leads to the same probability distribution:
\begin{enumerate}
\item Choose $m_S$ according to the binomial distribution $Bi(\epsilon,|S|)$.
\item Choose at random and independently for each node if it is malicious with probability $m_S/2|S|$. This gives $m'_S$ malicious nodes.
\item With high probability $m_S>m_S'$. Choose $m_S-m_S'$ other malicious nodes at random.
\end{enumerate}

Step 3 can be proved using standard Chernoff's bound arguments (more precisely by showing that $m_S'$ is close to $m_S/2$), so if we can prove that after Step 1, there is a set containing exclusively malicious nodes, the theorem is proved. We therefore proceed in this direction.

We denote by $Mal$ the set containing all the malicious nodes. A set $S_i$ has the following probability to contains exclusively malicious nodes: $Prob(|S_i \cap Mal|=|S_i|)\geq (m_S/2|S|)^{\omega^+(n)}$.
$Q$ is the event in which no set is composed exclusively of malicious nodes and has the following probability: $Prob(Q)\leq (1- (m_S/2|S|)^{\omega^+(n)})^{k}$. However, with high probability, we have:
\begin{align*}
\lim_{n\rightarrow \infty} &(1- (m_S/2|S|)^{\omega^+(n)})^{k}\\
&=\lim_{n\rightarrow \infty} e^{-(m_S/2|S|)^{\omega^+(n)}*k}\\
&=\lim_{n\rightarrow \infty} e^{-cn(m_S/2|S|)^{\omega^+(n)}/(\omega^-(n)\omega^+(n))}\\
&=0
\end{align*} 
because 1) $\omega^+(n)e^{\omega^+(n)}\omega^-(n)=o(n)$ and 2) with high probability, $m_S/2|S| > \epsilon/4$.

Therefore, for big enough $n$, we will have an honest node that will send all its messages exclusively to malicious nodes. In this case, the malicious nodes can discard all its messages thus preventing the computation of the output to be exact. 
\end{proof}

\begin{corollary}\label{cor}
A $(Poly(\log n),n)$-balanced protocol that computes with high probability the exact value of a function in a distributed manner, whose inputs are hold by $n$ nodes among which $\epsilon n$ are malicious for some positive constant $\epsilon<1$ (independent of $n$), induces a total of $\Omega(n\log n)$ messages.
\end{corollary}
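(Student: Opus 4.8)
The plan is to derive the $\Omega(n\log n)$ lower bound on the total number of messages as a direct consequence of Theorem~\ref{thm:low}, by arguing contrapositively. Suppose we have a $(Poly(\log n),n)$-balanced protocol that computes the exact value of the function with high probability; in particular, by Definition~\ref{def:fair}, every node \emph{receives} $O(Poly(\log n)\cdot C_{total}/n)$ messages. If we assume for contradiction that $C_{total} = o(n\log n)$, then each node receives at most $\omega^-(n) := Poly(\log n)\cdot o(\log n) = o(Poly(\log n))$ messages, which is certainly sub-polylogarithmic. First I would turn this receiving bound, together with the fact that $C_{total} = o(n\log n)$, into a bound on the \emph{senders}: since the total number of messages is $o(n\log n)$, a Markov-type averaging argument shows that a constant fraction $cn$ of the nodes each send at most $o(\log n)$ messages --- indeed, if more than $(1-c)n$ nodes sent $\Omega(\log n)$ messages, the total would already be $\Omega(n\log n)$. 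Set $\omega^+(n)$ to be this $o(\log n)$ bound.

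The second step is to check that the hypotheses of Theorem~\ref{thm:low} are met, i.e. that $\omega^+(n)e^{\omega^+(n)}\omega^-(n) = o(n)$. Here I would have to be slightly careful about the interaction of the two quantities. Since $\omega^+(n) = o(\log n)$, we have $e^{\omega^+(n)} = n^{o(1)}$, and since $\omega^-(n) = Poly(\log n)$, the product $\omega^+(n)e^{\omega^+(n)}\omega^-(n) = n^{o(1)}\cdot Poly(\log n) = n^{o(1)}$, which is indeed $o(n)$. One subtlety is that $o(n\log n)$ as a bound on $C_{total}$ does not by itself force $\omega^+(n)$ to be as small as $o(\log n / \log\log n)$ or whatever is needed to absorb the $Poly(\log n)$ factor in $\omega^-(n)$; but since $e^{o(\log n)} = n^{o(1)}$ dominates any $Poly(\log n)$ only in the sense that their product is still $n^{o(1)}$, the condition holds. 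I would make this precise by noting that for any $\omega^+(n) = o(\log n)$ and any $\omega^-(n) = Poly(\log n)$, $\log\bigl(\omega^+(n)e^{\omega^+(n)}\omega^-(n)\bigr) = \omega^+(n) + O(\log\log n) + O(\log\log n) = o(\log n)$, hence the product is $e^{o(\log n)} = o(n)$.

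With the hypotheses verified, Theorem~\ref{thm:low} applies and yields, with high probability, an honest node all of whose messages are intercepted by malicious nodes; those malicious nodes can then drop the messages, so the protocol cannot output the exact value with high probability, contradicting our assumption on the protocol. Hence $C_{total} = \Omega(n\log n)$, and since each message carries at least one bit, the same bound holds on the number of bits. The main obstacle I anticipate is the bookkeeping in the second step: making sure that the conjunction "a constant fraction of nodes send $o(\log n)$ messages" \emph{and} "no node receives more than $Poly(\log n)$ messages" both follow cleanly from $C_{total}=o(n\log n)$ and the $(Poly(\log n),n)$-balanced hypothesis, and that the resulting $\omega^+,\omega^-$ simultaneously satisfy $\omega^+(n)e^{\omega^+(n)}\omega^-(n)=o(n)$ --- essentially a careful chase through the asymptotic notation rather than a deep argument. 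One should also remark that the $n$ in the second coordinate of "$(Poly(\log n),n)$-balanced" is what makes the receiving bound $Poly(\log n)\cdot C_{total}/n$, which is the quantity that must be $o(Poly(\log n)\cdot\log n)$; this is harmless since Theorem~\ref{thm:low} only needs $\omega^-(n)$ to appear inside the product condition, not to be polylogarithmic on its own.
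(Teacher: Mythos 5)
Your proof is correct and follows essentially the same route as the paper's: assume $o(n\log n)$ total messages for contradiction, extract $\omega^+(n)=o(\log n)$ for a constant fraction $cn$ of senders and $\omega^-(n)=Poly(\log n)\cdot o(\log n)=Poly(\log n)$ from the balancedness hypothesis, and invoke Theorem~\ref{thm:low} to obtain a node all of whose messages can be discarded by the adversary. You in fact supply two details the paper leaves implicit, namely the Markov averaging step producing the constant fraction $c$ and the explicit verification that $\omega^+(n)e^{\omega^+(n)}\omega^-(n)=e^{o(\log n)}\cdot Poly(\log n)=o(n)$.
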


\begin{proof}
By contradiction, suppose that the protocol induces $o(n\log n)$ messages.
We have some $\omega^+(n)=o(\log n)$ and some constant $1>c>0$ such that $cn$ nodes send less than $\omega^+(n)$ messages. As the protocol is  $(Poly(\log n),n)$-balanced, no node receives more than $\omega^-=Poly(\log n)*o(\log n)=Poly(\log n)$ messages. These $\omega^+$ and $\omega^-$ verify the conditions of Theorem \ref{thm:low}, thus implying the existence of an honest node whose messages have been sent only to malicious nodes. Therefore as this node is ``surrounded'' by malicious nodes, they can decide to discard all its messages, thus preventing its input to be taken into account in the computation, and as a consequence the outcome cannot be exact. 
\end{proof}

\section{Distributed Aggregation Protocol}
\label{sec:protocol}

\subsection{Overview}
\label{high_level_description}

We now provide a high-level view of the protocol that distributively computes an aggregation function (see also Figure \ref{fig:principle}). Our protocol uses a distributed version of the protocol of \cite{Awerbuch2009}, which we describe in \ref{sec:layout}. It provides us with a partition of the $n$ nodes into $g$ clusters, $C_1, \ldots, C_g$, of equal size, which are organized in a ring such that all nodes of a cluster $C_{i}$ are connected to all nodes of the previous cluster ($C_{i-1\ mod\ g}$) and the next cluster ($C_{i+1\ mod\ g}$). Assuming that a threshold homomorphic cryptosystem has been set up and that all nodes know the public key $pk$ of this threshold cryptosystem, the aggregation function can be computed by the following steps:
\begin{enumerate}
\item Each node encrypts its input under the public key of the threshold homomorphic cryptosystem and securely broadcasts it to all the other nodes in its cluster. Secure broadcast ensures that (i) all honest players eventually output an identical message, whatever the sender does (consistency), and (ii) this output is the sender's message, if it is honest (validity) \cite{Hirt2010}.
\item) The nodes in a given cluster agree on a common random string $rand$, and then each node adds the encrypted input it received from its own cluster, using the addition operation of the homomorphic cryptosystem by taking $rand$ as the randomness injected into the homomorphic encryption. The result of this addition is called the \emph{local aggregate} and is the same for each honest node of the cluster.
\item Starting from cluster $C_1$, the nodes add their local aggregate with the partial aggregate they received from the previous cluster, with the exception of the nodes of cluster $C_1$ that have received no partial aggregate and hence skip this phase. The nodes of the cluster then send the result of this aggregation (the new \emph{partial aggregate}) to all the nodes of the next cluster. The nodes of the next cluster then consider as the new partial aggregate the encrypted value appearing in a majority (to correct potentially inconsistent messages that have been sent by malicious nodes). This process is repeated $g-1$ times until the partial aggregate reaches the last cluster. Then, we say that the \emph{partial aggregate} has become the \emph{global aggregate}.
\item The nodes of the last cluster perform a threshold decryption of the \emph{global aggregate} revealing the output of the aggregation function.
\end{enumerate}


\begin{figure}[t]
\begin{center}
\subfigure{\scalebox{0.3}{\input{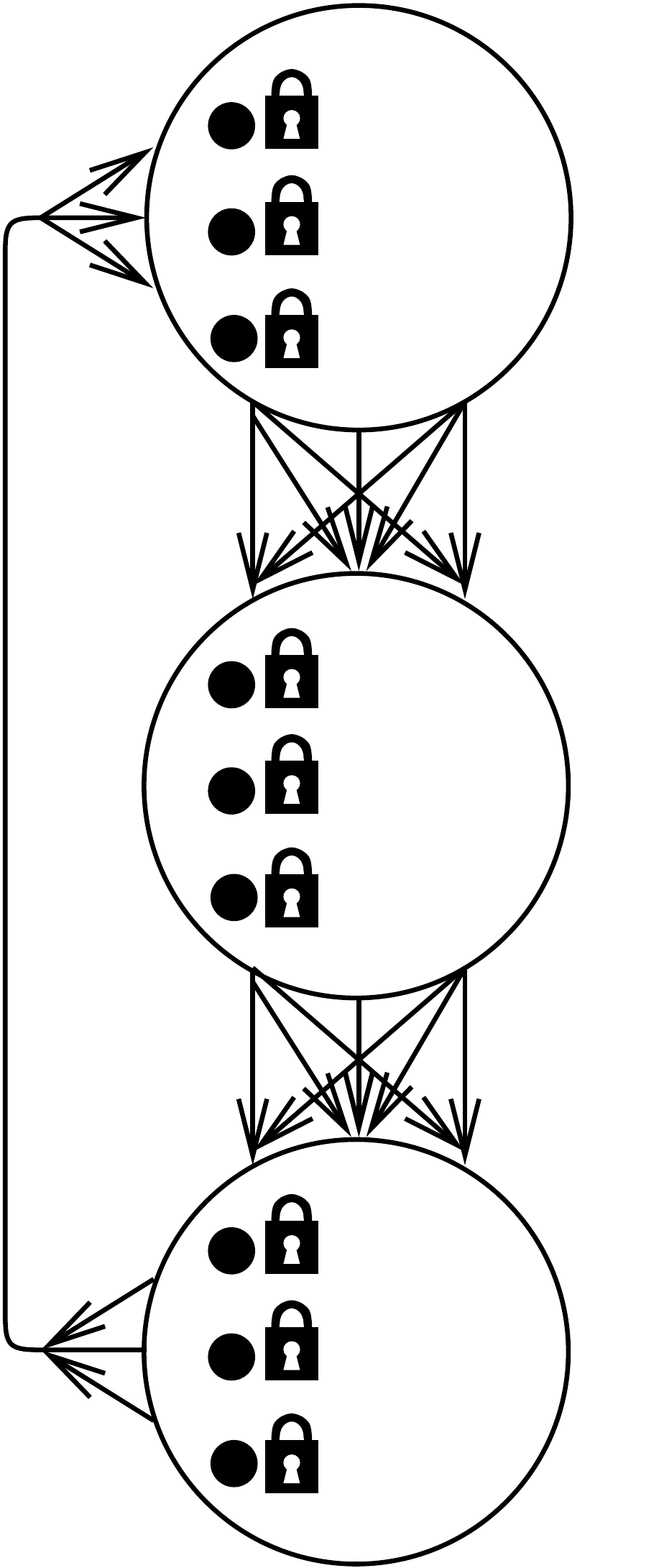_t}}}  \hspace{2cm}
\subfigure{\scalebox{0.3}{\input{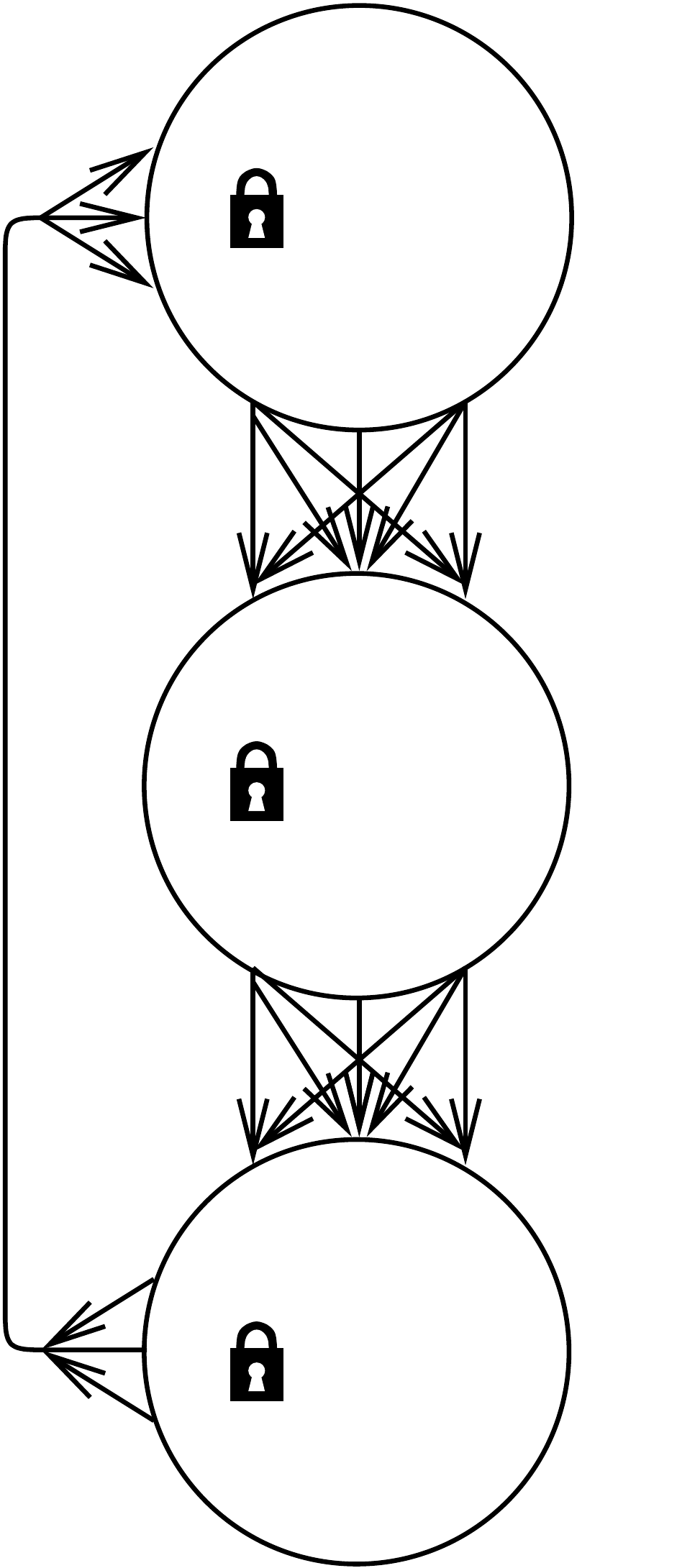_t}}}  \hspace{2cm}
\subfigure{\scalebox{0.3}{\input{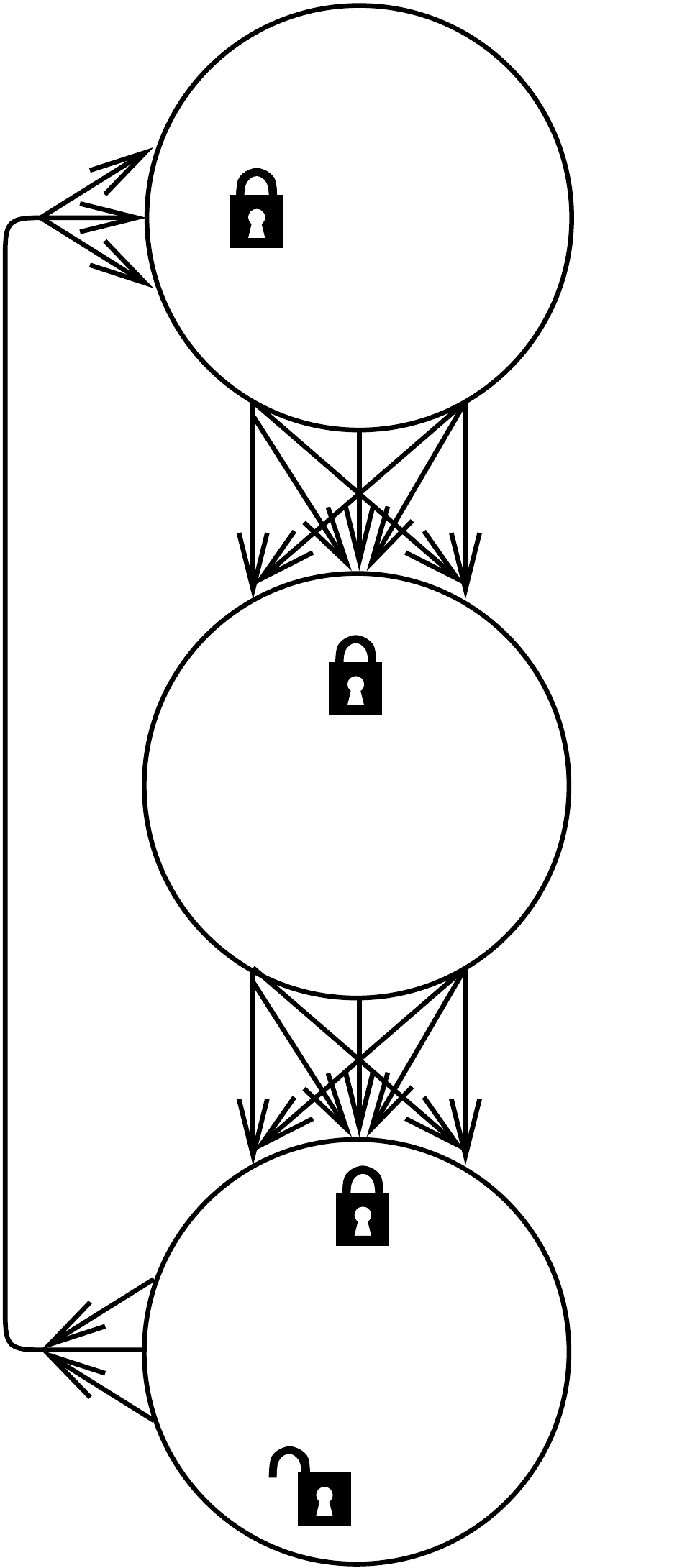_t}}} 
\caption{Main idea of the algorithm. First, all nodes start by encrypting their inputs and broadcasting them to all the other nodes of the cluster. Within each cluster, each node computes the local aggregate ($\sum a_i$, $\sum b_i$ and $\sum c_i$), which is then propagated along the ring. After this, the nodes of cluster $B$ know $\sum a_i + \sum b_i$, and those in $C$, $\sum a_i + \sum b_i + \sum c_i$. The nodes of the last cluster (here $C$) collaborate to perform the threshold decryption and to obtain $S$, the output of the aggregation function.}
\label{fig:principle}
\end{center}\end{figure}

\begin{remark}[Bootstrapping]
The distributed aggregation protocol is meant to be used in a dynamic network, but still a certain stability of the partition of the cluster is required while this computation is taking place. To achieve this, when a node wants to start the computation of an aggregation function, it launches the protocol by broadcasting a time window during which the aggregation protocol will be performed. Any join operation occurring during this time window will be postponed, and similarly, any operation induced by the "leave rule" (cf Section \ref{sec:layout}) , except a node leaving itself, will be postponed. Using the layout, any node can perform a broadcast for a communication cost of $O(n\log n)$ by first broadcasting it to its cluster, and then the clusters recursively forward the message along the ring. 
\end{remark}

\subsection{The overlay}\label{sec:layout}
For the purpose of the aggregation protocol, we need the nodes to be organized in a layout consisting in a partition of the nodes into clusters $C_1, \ldots, C_g$ of size $O(\log n)$. To achieve this, we assume that the nodes join the network according to a distributed version of the protocol presented in \cite{Awerbuch2009}. This protocol uses a central authority and we explain how to discard this requirement at the end of the section, but first we describe its functionnality. The central athority is used to assign to each node a random number between 0 and 1, number which stands for the position of the node in the segment $[0,1)$. 
 By inducing churn when a node joins the system, the protocols ensures that each segment of size $c\log(n)/n$, for some specific $c$, contains a majority of honest nodes as long as the adversary controls at most a fraction of $\frac{1}{2}-\epsilon$ of malicious nodes, for some constant $\frac{1}{2} > \epsilon > 0$. The clusters $C_1, \ldots, C_g$ are composed of the nodes whose positions are in the respective segments $[0, c\log n/n), \dots, [1-c\log n/n,1)$. 

In order to efficiently distribute the protocol, we first arrange the \emph{clusters} in a Chord-like overlay \cite{Young2010}. The adapted join-leave protocol must further ensure that, for all $1\leq i \leq g$, the nodes from cluster $C_i$ know all the nodes from $C_{i+2^j}$, for all $1\leq j \leq \log_2 g$, resulting in $O(\log^2 n)$ connections per node (c.f. Figure \ref{Chord}).

\begin{figure}
\begin{center}

\scalebox{0.4}{\input{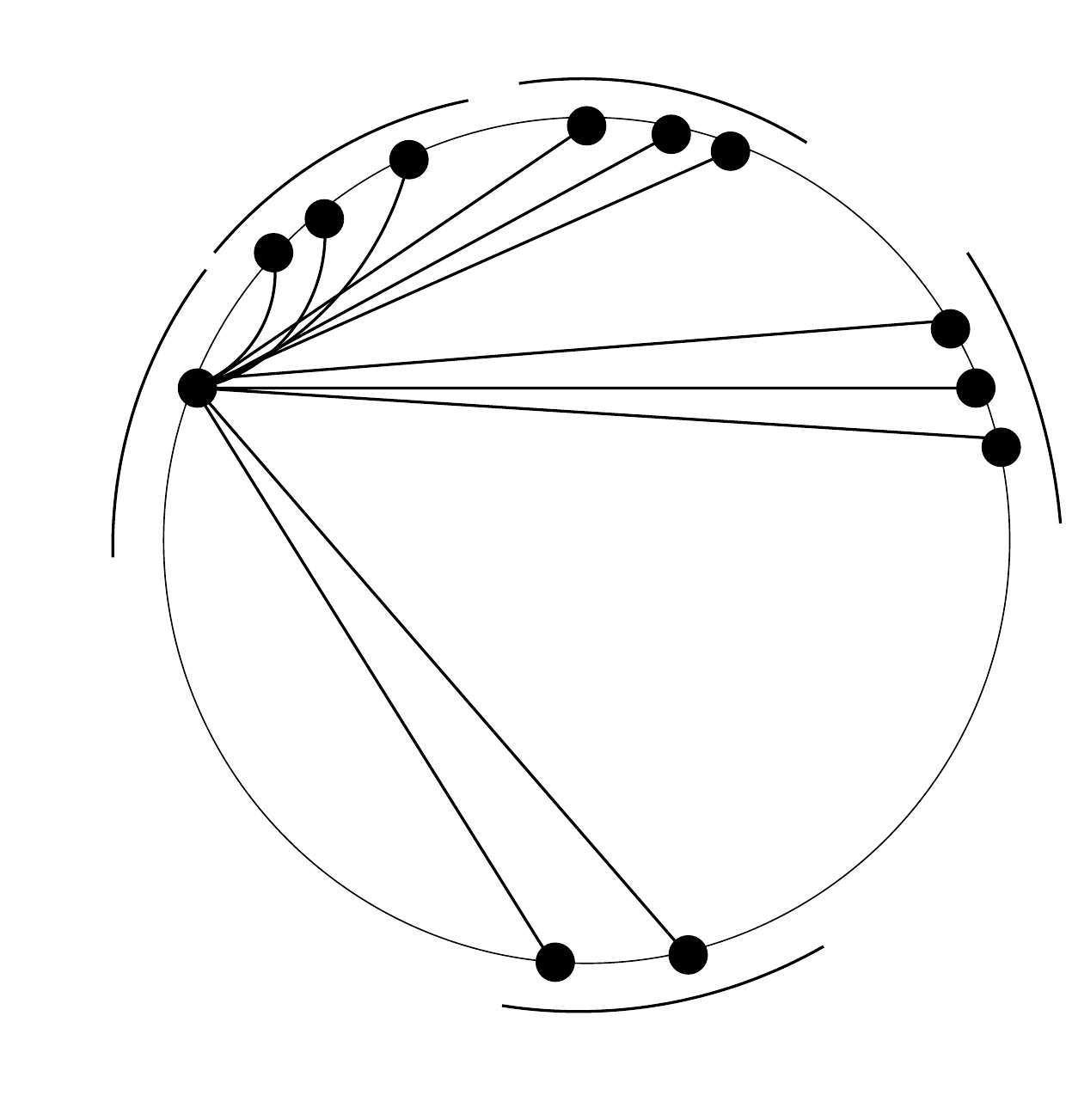_t}}
\caption{Chord overlay.}\label{Chord}
\end{center}
\end{figure}
As in \cite{Awerbuch2009}, we assume that initially, the network is composed of only honest nodes (this assumption is also typical in sensor networks or in networks created by trusted peers).
This assumption allows to bootstrap the layout construction by building an initial overlay by assigning to each initial node a position in $[0,1)$. We now describe how to adapt the join rule of \cite{Awerbuch2009}, known as the cuckoo rule, and how to maintain the overlay. We further show how to adapt the leave rule of \cite{scheideler} that is required when the adversary can force a node to leave the network (for instance through a denial-of-service attack).

We will rely on two subroutines to distribute these rules.
\begin{enumerate}
\item {Inter-cluster communication:} A node from a cluster $C$ receiving a message from a node of a cluster $C'$ accepts it if and only if $C'$ is a neighbor of $C$ in the Chord overlay, and at least half of the nodes of $C'$ have sent the same message.
\item {Random Number Generation:} To generate a number at random (typically between 0 and $\log n$), the nodes of a cluster $C$ collaborate as follows: each node of $C$ chooses a random number in the desired range, and commits it to the other nodes of $C$ via a secure broadcast. 
Once all nodes have done so or after a given time-out if some nodes do not participate, the committed values are revealed by sending the decryption keys again using the secure broadcast. All the valid random numbers are added modulo some predecided upper bound ($\log n$ in our example) and the result corresponds to the generated random number, which is agreed upon by all the honest nodes.
\end{enumerate}

We can now explain how the cuckko rule is modified in order to discard the requirement of a central authority

\paragraph{Join rule or cuckoo rule:} when a node $x$ joins the network, we suppose that it can come in contact with one of the node of the network. 
The node gives the composition of an arbitrary cluster to $x$ (this cluster is chosen among the clusters that the contacted node knows). In turn, $x$ contacts the whole cluster (i.e. all its members) with a join request. 
Afterwards, this cluster starts performing the cuckoo rule from \cite{scheideler}, which corresponds to choosing a position at random in $[0,1)$ for $x$. 
 Then the nodes of the cluster containing the chosen position, which we call $C$, are informed via messages routed using the Chord overlay that $x$ is inserted at this position. 
At this point, extra churn is induced, and for a constant $k>\frac{1}{\epsilon}$, $C$ chooses a new random position for all the nodes of $C$ whose positions are in a segment of length $k/n$ containing the previously chosen position (c.f. \cite{scheideler} for more details). Whenever a node $x$ of a cluster $C'$ is required to change its position to join a cluster $C"$, all the nodes that were adjacent or that become adjacent to this node, are informed of the change by messages sent by the nodes of $C'$ and $C"$ respectively. 
This is important since we need that at any time, if two clusters  $C$ and  $C'$ are adjacent in the Chord overlay, all nodes of $C$ should know the exact composition of $C'$ to decide whether or not it accepts a message from nodes of $C'$ during inter-cluster communication.

\ \\

When the adversary can force any honest node to leave the network, it is shown in \cite{scheideler} that churn needs to be introduced whenever a node leaves. For this, they designed a leave rule that can be adapted as follows:
\paragraph{Leave rule:}
when a node at position $p$ leaves the system, the cluster $C$ to which it belongs chooses a segment of $[0,1)$ of length $k/n$ (recall that $k>\frac{1}{\epsilon}$). All the nodes of this segment replace the nodes of a randomly chosen segment of same length contained in $C$. The old nodes of the replaced segment are moved to position chosen at random in $[0,1)$. Similar to the join operation, all the nodes that were adjacent or that become adjacent to the moved nodes are informed of these changes.


With high probability, the communication overhead of the distributed version of this protocol is $O(\log^3 n)$ per join or leave operation. Indeed, with high probability, the number of nodes in a segment of size $k/n$ is $O(k)$, therefore with high probability the protocol needs to generate a constant number of random numbers (for a constant depending on $k$ and therefore on $\epsilon$).
This protocol ensures that each cluster contains $\Omega(\log n)$ nodes, among which there is a majority of honest nodes as long as $\tau \leq 1/2-\epsilon$ when one do not consider denial-of-service \cite{Awerbuch2009}, or for $\tau \leq 1/5-\epsilon$ otherwise \cite{scheideler}.

\subsection{Aggregation computation}\label{sec:agregation}

In this section, we present the second phase of the protocol that computes the aggregation function. This protocol is optimal up to a logarithmic factor in terms of scalability (i.e. communication and computational complexity) and achieves perfect privacy, correctness against a computationally-bounded adversary controlling at most $(1/2-\epsilon)n$ malicious nodes, for a constant $\frac{1}{2} > \epsilon > 0$ independent of $n$ and is balanced. 
For this, we rely on cryptographic techniques that would be too expensive to use on the whole network, but whose costs are efficient when computed at the granularity of a cluster in the structured overlay.

\subsubsection{Setting up the threshold cryptosystem (Step 1)} One particular cluster (that we refer thereafter as the \emph{threshold cluster}) will be in charge of setting up the threshold cryptosystem. For instance, we can decide arbitrarily this threshold cluster to be the last one on the ring. This setup phase requires that all the nodes of this threshold cluster engage themselves in a distributed key generation protocol \cite{Nishide2011} for the threshold homomorphic cryptosystem \cite{DamgardJ01}. 
At the end of this key generation phase, all the nodes in the threshold cluster receive the same public key $pk$ and each one of them gets as private input a different secret key, respectively $sk_1,\dots,sk_{k\log n}$. The threshold cryptosystem is such that any node can encrypt a value using the public key $pk$ but that the decryption of a homomorphically encrypted value requires the active cooperation of at least $t$ of the nodes. In our case, we can set $t$ to be close to $\frac{k\log n}{2}$ to ensure that there will be enough honest nodes to cooperate for the final decryption of the result at the end of the protocol. Once the threshold cryptosystem has been set up, the public key $pk$ is communicated in the network cluster by cluster, following the ring structure. Thereafter, when we say that \emph{a cluster communicates} something \emph{to the next cluster}, we mean that all the nodes in the current cluster communicate the same value to all the nodes in the next cluster, which results in a communication cost of $O(\log^2 n)$. Once, a node in one cluster has received all the $k\log n$ messages from the previous cluster, it decides to keep as final value for this particular round of intercluster communication the value obtained by performing a majority vote. This value always corresponds to the unique value sent by all the honest nodes of the previous cluster, which are themselves in majority inside this cluster due to the property of the structured overlay.

\subsubsection{Local aggregation (Step 2)} The nodes within a cluster communicate their value encrypted under the public key $pk$ of the homomorphic encryption system to all the other nodes of the cluster through a secure broadcast channel. When a node sends a message in such a channel, it is received by all the other nodes of the cluster, who know the identity of the sender of this message. The complexity of constructing such a channel is polynomial in the number of  nodes of the cluster, which in our case is $O(\log^2 n)$ and can be obtained for instance by using a recent construction of Hirt and Zikas \cite{Hirt2010} as long as the number of malicious nodes is less than half of the nodes, which is by assumption the case in our protocol. All the members of the current cluster encrypt their inputs under the public key $pk$ and broadcast this encrypted value to all the members of the cluster along with a non-interactive zero-knowledge proof that this input is valid \cite{Yuen2009} 
(i.e. chosen from the range of valid values $\{\nu_1,\dots,\nu_k\}$). The purpose of this non-interactive zero-knowledge proof is to prevent an adversary from tampering with the output of the protocol by providing as input an arbitrary value that is outside the range of the possible ones. The privacy of the inputs is protected by the semantic property of the cryptosystem. Once all the nodes in the cluster have received the $\log n$ encrypted inputs from the other members, they add them together using the additive property of the homomorphic cryptosystem. This  produces as output the encryption of the sum of all the inputs within this cluster. With respect to the randomness used in the addition operation of the homomorphic encryption, we assume that all the nodes have agreed on a common value $rand$. This value can be obtained for instance by concatenating all the encrypted inputs and then hashing them to obtain a unique random value.

\subsubsection{Global aggregation and threshold decryption (Step 3).}
 The global protocol  proceeds iteratively during $\frac{n}{\log n}$ iterations, cluster by cluster, starting from the first cluster and following the ring structure of the overlay. At each iteration, the current cluster performs the local aggregation as described above and also aggregates the sum of the local inputs with the encrypted value received from the previous cluster to compute the partial aggregate. This partial aggregate corresponds to the global aggregation of the inputs of the clusters visited so far. As mentioned previously, the encrypted value received from the previous cluster can be computed locally by each node of the current cluster by making a majority vote on the $k\log n$ messages received from the previous cluster. Once the threshold cluster has been reached (i.e. the last cluster on the ring), the members of this cluster add their local aggregated values to the encrypted value received from the previous cluster. This produces an encryption of the sum of all the values. Finally, the members of the threshold cluster cooperate to decrypt this encrypted value by using their private keys. Along with their decryption shares, the nodes send a non-interactive zero-knowledge proof showing that they have computed a valid decryption share of the final outcome \cite{DamgardJ01}. As the number of nodes needed to decrypt successfully is $t=\frac{k\log n}{2}$ and that we have a majority of honest nodes in the cluster, this threshold decryption is guaranteed to be successful. The final output is forwarded cluster by cluster, following the ring structure of the overlay. 

\subsection{Analysis of the protocol}
\label{analysis}
During the protocol, due to the use of the Paillier's cryptosystem all the messages exchanges will be of constant size. The value of the constant is directly proportional to a security parameter of the cryptosystem and will not be larger than $\log n$. In the following, we count it as $O(\log n)$.

\begin{lemma}[Communication cost of the protocol]
The protocol for the distributed computation of an aggregation function has an overall communication complexity of $O(n \log^3 n)$ and is $(Poly(\log n),Poly(\log n))$-balanced in the sense that no node sends or receives more than $Poly(\log n)$ bits of information for an average of $O(\log^3 n)$ bits of information per node.
\end{lemma}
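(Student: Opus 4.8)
The plan is to account for the communication of each of the four steps of Section~\ref{sec:agregation} separately, using three facts: the overlay is a ring of $g=\Theta(n/\log n)$ clusters of size $\ell=\Theta(\log n)$; every message has size $O(\log n)$ bits; and ``a cluster communicates a value to the next cluster'' means all $\ell$ nodes of one cluster send that value to all $\ell$ nodes of the other, i.e.\ $O(\ell^2)=O(\log^2 n)$ messages and hence $O(\log^3 n)$ bits per such inter-cluster transfer. Throughout I treat the overlay as frozen, since by the Bootstrapping remark all join operations and all churn induced by the leave rule are postponed during the execution of the aggregation, so only the four steps must be counted.

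I would first handle the inexpensive steps. Setting up the threshold cryptosystem is a distributed key-generation protocol \cite{Nishide2011} run inside one cluster of $\ell=O(\log n)$ nodes, so it costs $\mathrm{poly}(\log n)$ bits and is incurred once; the threshold decryption at the last cluster is similarly local to one cluster (each of its $\ell$ nodes sends a decryption share with an $O(\log n)$-bit NIZK proof \cite{DamgardJ01}), again $\mathrm{poly}(\log n)$ bits. Propagating the public key $pk$ along the ring, propagating the partial aggregate along the ring during global aggregation, and forwarding the final output along the ring are each a chain of $g$ inter-cluster transfers, hence each costs $O(g\cdot\log^3 n)=O\!\left(\frac{n}{\log n}\cdot\log^3 n\right)=O(n\log^2 n)$ bits; the one-time bootstrapping broadcast of the time window is $O(n\log n)$ by the Bootstrapping remark. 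All of these are $O(n\log^2 n)$.

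The dominant contribution is the local aggregation (Step~2). In each cluster, every one of its $\ell$ members securely broadcasts its encrypted input together with an $O(\log n)$-bit NIZK validity proof \cite{Yuen2009} (here one uses that the number $k$ of admissible values is constant, or at worst $\mathrm{poly}(\log n)$, so that the proof stays $\mathrm{poly}(\log n)$-sized); agreeing on the randomness $rand$ costs nothing, as it is the hash of the already-exchanged ciphertexts. A single secure broadcast among $\ell=\Theta(\log n)$ nodes costs $O(\ell^2)=O(\log^2 n)$ messages \cite{Hirt2010}, so one cluster incurs $O(\ell^3)=O(\log^3 n)$ messages, i.e.\ $O(\log^4 n)$ bits; summing over the $g=\Theta(n/\log n)$ clusters gives $O(g\cdot\log^4 n)=O(n\log^3 n)$ bits. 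Adding the $O(n\log^2 n)$ from all other steps, the overall communication is $O(n\log^3 n)$.

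For the balanced property I would fix an arbitrary node $v$, say in cluster $C_i$, and add up what it sends and receives. During local aggregation $v$'s share of each secure broadcast is $O(\ell)$ messages and $v$ takes part in $\ell$ broadcasts, for $O(\ell^2)=O(\log^2 n)$ messages, i.e.\ $O(\log^3 n)$ bits. During the inter-cluster phases $v$ receives a value from each of the $\ell$ nodes of $C_{i-1}$ and sends to each of the $\ell$ nodes of $C_{i+1}$, for a constant number of transported objects ($pk$, the partial aggregate, the output), hence $O(\ell)=O(\log n)$ messages and $O(\log^2 n)$ bits; if $C_i$ happens to be the threshold cluster, $v$ additionally runs its share of key generation and decryption, adding $\mathrm{poly}(\log n)$ bits. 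Thus every node sends and receives $O(\log^3 n)=\mathrm{poly}(\log n)$ bits, and since the total is $O(n\log^3 n)$ the per-node average is $O(\log^3 n)$ bits, which is precisely the $(\mathrm{poly}(\log n),\mathrm{poly}(\log n))$-balanced claim in the sense of Definition~\ref{def:fair}. I expect the only genuine subtlety to be charging the in-cluster secure broadcast correctly — i.e.\ $O(\ell^2)$ messages per broadcast with $\ell$ broadcasts per cluster (not $g\ell$), which is what yields the crucial $\log^3 n$ factor — and confirming that the NIZK proofs and the distributed key-generation transcript really are $\mathrm{poly}(\log n)$-sized; the remaining bounds are routine geometric sums around the ring.
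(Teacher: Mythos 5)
Your proposal is correct and follows essentially the same decomposition as the paper's proof: Steps 1 and 3 are confined to a single cluster and hence cost $\mathrm{poly}(\log n)$, the $n$ in-cluster secure broadcasts of Step 2 at $O(\log^3 n)$ bits each dominate and give $O(n\log^3 n)$, and the ring-wide inter-cluster transfers contribute only $O(n\log^2 n)$. The only difference is that you spell out the per-node accounting for the balanced property, which the paper merely asserts as evident from the protocol description.
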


\begin{proof}
During the setup of the threshold cryptosystem (Step 1) and the threshold decryption (Step 3), only one cluster is involved and the communication cost of the primitives used (threshold cryptosystem setup, secure broadcast, and threshold decryption) is polynomial in the size of the cluster, which corresponds to a communication complexity of $O(Poly(\log n))$. 
During the local aggregation (Step 2), in each cluster, each node broadcasts its encrypted input and a broadcast induces a communication cost of $O(\log^3 n)$. As there are $O(n/\log n)$ clusters with $O(\log n)$ nodes in each cluster, it results in a communication cost of $O(n\log^3 n)$. Finally, the intercluster communication requires that all the $n$ nodes send $O(\log n)$ messages, each of size $O(\log n)$, to the nodes of the next cluster, resulting in a communication cost of $O(n\log^2 n)$. As a result, the protocol is dominated by Step 2 (the local aggregation), which leads to a global communication cost of $O(n \log^3 n)$. Moreover, it is easy to see from the description of the protocol that it is balanced in the sense that it requires $O(Poly(\log n))$ communications from each node. 


\end{proof}

\begin{lemma}[Security of the protocol]
The protocol for the distributed computation of an aggregation function ensures perfect security against a computationally-bounded adversary, tolerates $(\frac 12-\epsilon)n$ malicious nodes for any constant $frac{1}{2} > \epsilon > 0$ (not depending on $n$), and outputs the exact value of the aggregated function with high probability.
\end{lemma}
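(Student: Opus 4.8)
The plan is to establish the three claimed properties---privacy, correctness, and fault-tolerance---separately, in each case tracing the guarantee from the cluster level (where standard cryptographic machinery applies) up to the global ring-based aggregation. The guiding principle is a composition/simulation argument: the overlay from Section~\ref{sec:layout} guarantees that every cluster has $\Omega(\log n)$ members with a strict honest majority (whp, under $\tau\le\frac12-\epsilon$), and every primitive we invoke inside a cluster (distributed key generation \cite{Nishide2011}, secure broadcast \cite{Hirt2010}, threshold decryption \cite{DamgardJ01}, NIZK proofs \cite{Yuen2009}) is secure precisely under an honest-majority assumption, which is met. So the first step is to state carefully that conditioning on the good event that all clusters have honest majorities (which holds whp by the cited analysis of \cite{Awerbuch2009,scheideler}), all these sub-protocols behave ideally, and to carry the ``whp'' qualifier through to the final statement.

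For \emph{correctness}, I would argue inductively along the ring. The base case is cluster $C_1$: by the validity property of secure broadcast, every honest node in $C_1$ receives the same encrypted input from each (honest or malicious) member, so all honest nodes compute the identical local aggregate, and the NIZK proofs force each contributed ciphertext to encrypt a value in $\{\nu_1,\dots,\nu_k\}$ so no out-of-range input corrupts the sum. For the inductive step, the key observation (already used in the protocol description) is that the honest nodes of cluster $C_{i-1}$ all send the \emph{same} ciphertext to $C_i$, and since they form a majority, the majority-vote rule at $C_i$ recovers exactly that value despite arbitrary messages from malicious nodes; combined with the agreed randomness $rand$ making the homomorphic additions deterministic, all honest nodes of $C_i$ again agree on the partial aggregate. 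After $g-1$ steps the threshold cluster holds $\mathsf{Enc}(\sum_i x_i)$; since $t=\frac{k\log n}{2}$ and the honest majority exceeds $t$, threshold decryption succeeds and the decryption-share NIZKs prevent malicious nodes from contributing bogus shares. Hence the output equals $f(x_1,\dots,x_n)$ whp.

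For \emph{privacy}, the argument is that the adversary's entire view can be simulated from its own inputs and the final output. Inside each cluster, secure broadcast leaks nothing beyond the (encrypted) messages sent; the plaintext inputs of honest nodes are hidden by the semantic security of Paillier's threshold cryptosystem, and because decryption requires $t$ shares while the adversary controls fewer than $t$ nodes in the threshold cluster, no intermediate plaintext is ever exposed. One must check that the partial aggregates travelling along the ring are themselves just ciphertexts under $pk$ and thus reveal nothing by semantic security, and that the $rand$ values (being hashes of public ciphertexts) add no leakage. A standard hybrid argument then shows the adversary cannot distinguish an execution with the true honest inputs from one with arbitrary honest inputs consistent with the announced output, which is exactly Definition~\ref{def:privacy-active-adversary}; here ``perfect security against a computationally-bounded adversary'' means the simulation is statistically faithful modulo the computational assumption underlying Paillier.

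For \emph{fault-tolerance}, I would note that a malicious node refusing to broadcast is handled by the time-out in Step~2 (its input is simply omitted, which the problem statement explicitly permits), and a malicious node leaving mid-execution is absorbed by the bootstrapping remark that postpones churn during the time window, together with the fact that every cluster-level primitive already tolerates up to half its members behaving arbitrarily, including crashing. The main obstacle I anticipate is the privacy composition: one must argue that running many instances of secure broadcast and threshold operations across $g=\Theta(n/\log n)$ clusters, with the same key material and with partial aggregates flowing between clusters, still composes securely against the static active adversary---i.e. that the per-cluster simulators glue into a global simulator---and that the ``good overlay'' event is independent enough of the protocol randomness that conditioning on it does not break semantic security. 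Everything else is a fairly mechanical lifting of honest-majority guarantees through the ring, so I would keep this proof short and delegate the heavy lifting to the cited primitives.
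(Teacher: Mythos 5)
Your proposal is correct and follows essentially the same approach as the paper's proof: semantic security plus the threshold requirement for privacy, and the same four correctness ingredients (input-validity NIZKs, secure broadcast for local agreement, majority vote with per-cluster honest majorities for propagation along the ring, and decryption-share NIZKs). The paper's own argument is considerably terser --- it simply lists these ingredients without the inductive structure, the explicit conditioning on the good-overlay event, or the simulation/composition concerns you raise, all of which are reasonable elaborations rather than departures.
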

\begin{proof}
The privacy of the inputs of individual nodes is protected by the use of a cryptosystem that is semantically secure and also by the fact that the adversary cannot decrypt the partial aggregate because it does not know the necessary $t$ secret keys of the threshold cryptosystem to do so. The correctness is ensured by a combination of several techniques. First, the non-interactive zero-knowledge proof that each node issues along with the encrypted version of its value guarantees that the malicious nodes cannot cheat by choosing their values outside the range of the possible ones. Second, the secure broadcast ensures that honest nodes in each cluster have the same local aggregate. Third, the majority vote performed every time the nodes of a cluster communicate with the nodes of the next cluster along with the fact that there is a majority of honest nodes in each cluster due to the construction of the structured overlay ensures that the correctness of the partial aggregate will be preserved during the whole computation. Finally, the non-interactive zero-knowledge proof of the validity of the partial shares during the threshold decryption prevent the malicious nodes from altering the output during the last step of the protocol. 
\end{proof}

\section{Experimental Evaluation}

In this section, we evaluate the practical performance of the protocol through experimentations. More precisely, we compare the communication and time complexity of our protocol to a non-layout based protocol. We implemented a distributed polling protocol with two choices to select from, which is an instance of an aggregation protocol, where the nodes' votes are encoded into inputs for the aggregation protocol and decoded at the end. 
The experiments were run on Emulab \cite{White+:osdi02}, a distributed testbed allowing the user to choose a specific network topology using NS2 configuration file. In each experiment, we used up to 80 \emph{pc3000} machines, Dell PowerEdge 2850s systems, with a single 3 GHz Xeon processor, 2 GB of RAM, and 4 available network interfaces. Each machine ran Fedora 8 and hosted 10 nodes at a time. The nodes were connected to the router in a star topology, setting the maximum network bandwidth to $1000$Mb, and the communication relied on UDP. We used the ``Authenticated Double-Echo Broadcast algorithm'' (as described in \cite{introreliable}) for secure broadcast and the ``Paillier Threshold Encryption Toolbox'' \cite{paillieronline} for threshold encryption. The nodes adjust their message sending rate to be uniformly distributed between 0 and 2 seconds.\\
The only protocol that comes close to our protocol in terms of guarantees is a non-layout based protocol in which each node 1) securely broadcasts its encrypted input to all the other nodes, 2) combines the values it receives, 3) securely broadcasts its decryption share to the others, and 4) combines the decryption shares to obtain the output. Such a protocol provides privacy and correctness with certainty against an honest majority for a communication complexity of this protocol is $O(n^3)$. Therefore, even for medium sized networks under test (200-800 nodes), it reaches prohibitive complexities, in the order of hundreds of millions of messages exchanged. Accordingly, to evaluate this protocol in comparison to ours, we measured the complexity of single secure broadcast instances. Since the broadcast instances are assumed to be run in sequence to avoid congestion, we can simply add the communication and time complexities. For ease of implementation, we used a centralized key generation authority for setting up the threshold Paillier cryptosystem, whose modulus is fixed to 1024 bits. When evaluating our protocol, we included the overhead of generating keys for the threshold cryptosystem, whereas we assume that the threshold cryptosystem is already set up for non-layout based protocol. This is justified by the fact that these keys can be kept longer than in our case, where the keys need to be generated each time the layout is changed. Moreover, we take the cluster size to be $20*\log n$, which was found to be adequate for an adversary fraction of $\frac{3}{10}$.


\emph{ a) Communication Cost:}

Communication cost is the total number of Megabytes sent during the protocol execution.  Figure~\ref{fig:experiments}.a depicts the global communication cost for the two protocols, on a semi-log scale, with a varying network size.
It is apparent that our protocol (labelled as $DA$) features a much smaller cost growth than the non-layout based one (labelled as $NL$) as the number of nodes increases. With only 200 nodes in the network, the $NL$ protocol is expected to require around 30 GB overall! To get further insight, we plot in Figure~\ref{fig:experiments}.b the communication cost per node. It is noticeable that the growth now is negligible with $DA$ protocol, with  a 20 MB cost per node while the trend remains as before with $NL$, reaching 600 MB per node cost in a 400 nodes network.

\emph{ b) Execution Duration:}
In this experiment, for each network size, we average the time taken by a complete execution of the two protocols over all the nodes. This is represented by the bars in Figure~\ref{fig:experiments}.c. Similar to the previous experiment, the growth is much smoother with our $DA$ protocol. While the $NL$ protocol reaches tens of hours of execution duration, our protocol remains within the scale of minutes. The major part of this duration is communication delays as apparent from Figure~\ref{fig:experiments}.d, which shows the duration of the major computationally expensive steps. Particularly, we plot the durations of the vote encryption, share computation, and vote decryption (share combination) for the different decryption network sizes. The first four correspond to the decryption cluster size in our protocol consecutively mapped to the same network sizes as before (200, 400, 600, 800). This figure highlights  the efficiency of our protocol and shows that computational complexity is small when compared to the global time complexity. It further indicates that decryption is the most expensive step in such protocols, thus supporting our technique in delegating this task to small clusters.\\
It is important to note that the goal of this implementation was to provide the comparison between the two protocols and not to provide accurate performance measures of our protocol under various scenarios. Otherwise, several optimizations on the message size and message complexities can be integrated to reduce the complexity in real-world implementations.


\label{exp_evaluation}

\begin{figure}
\begin{center}

\subfigure[Global communication cost]{\scalebox{0.35}{\includegraphics{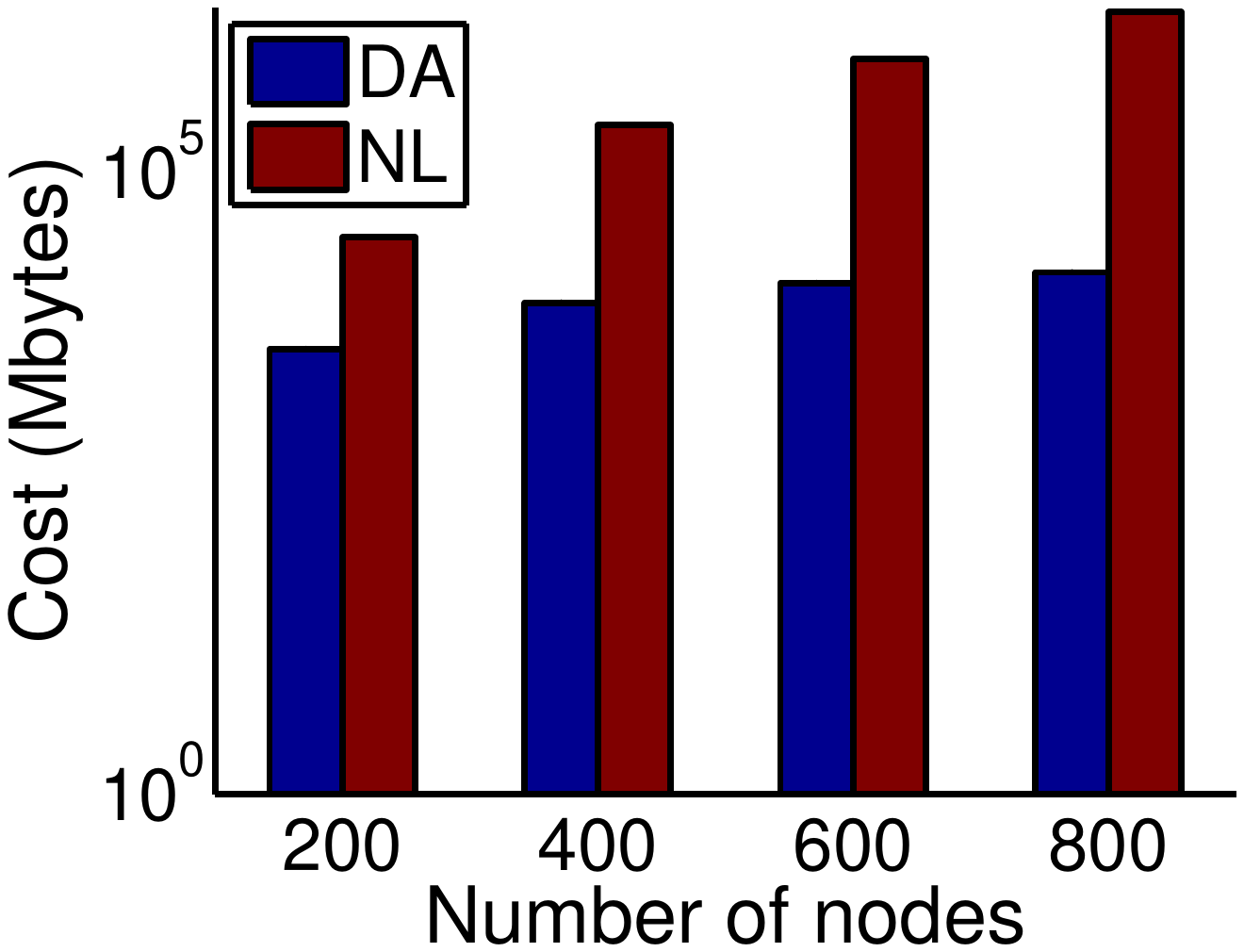}}} \hspace{2cm}
\subfigure[Communication cost per node]{\scalebox{0.45}{\includegraphics{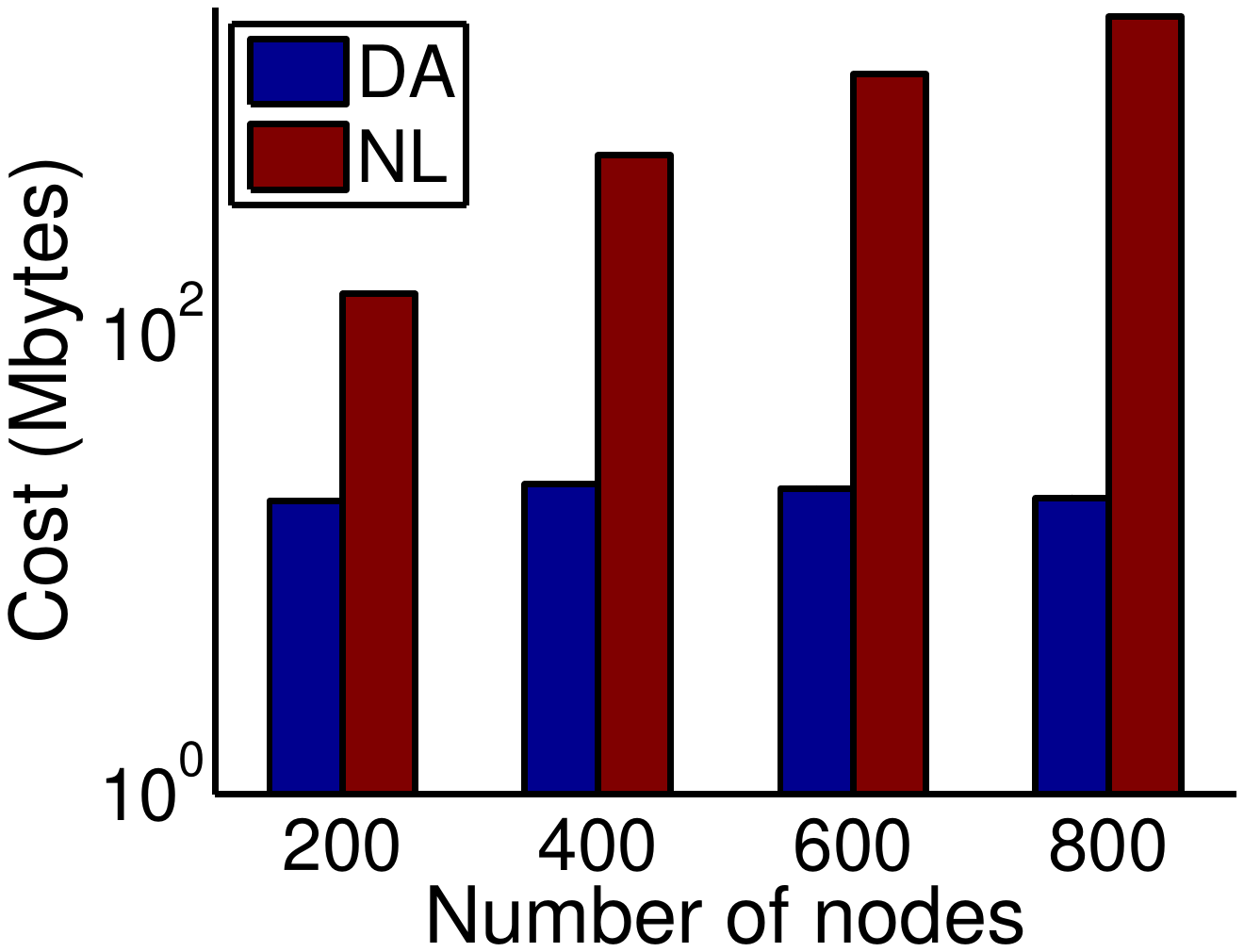}}}\\
\subfigure[Time complexity]{\scalebox{0.45}{\includegraphics{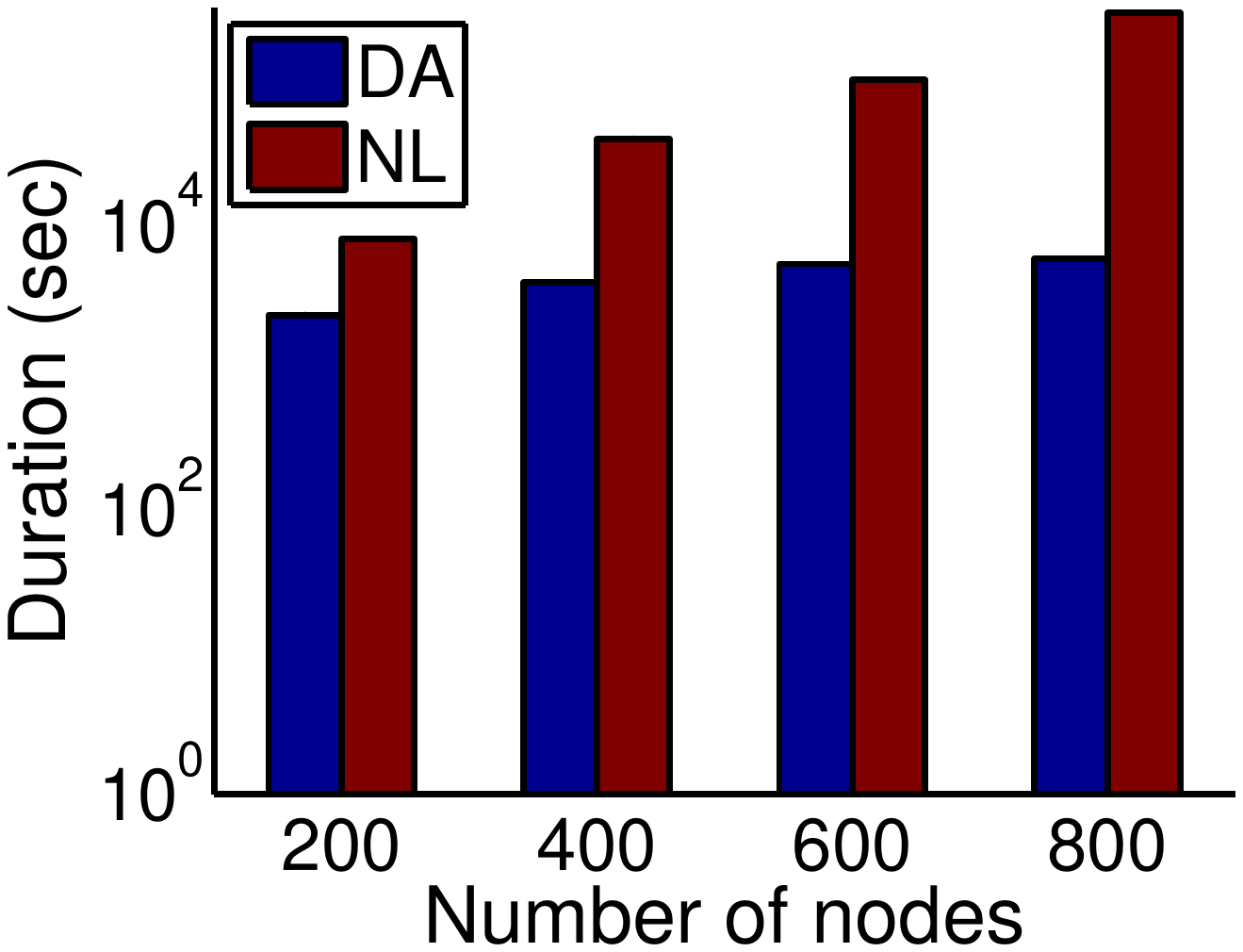}}} \hspace{2cm}
\subfigure[Computational complexity]{\scalebox{0.45}{\includegraphics{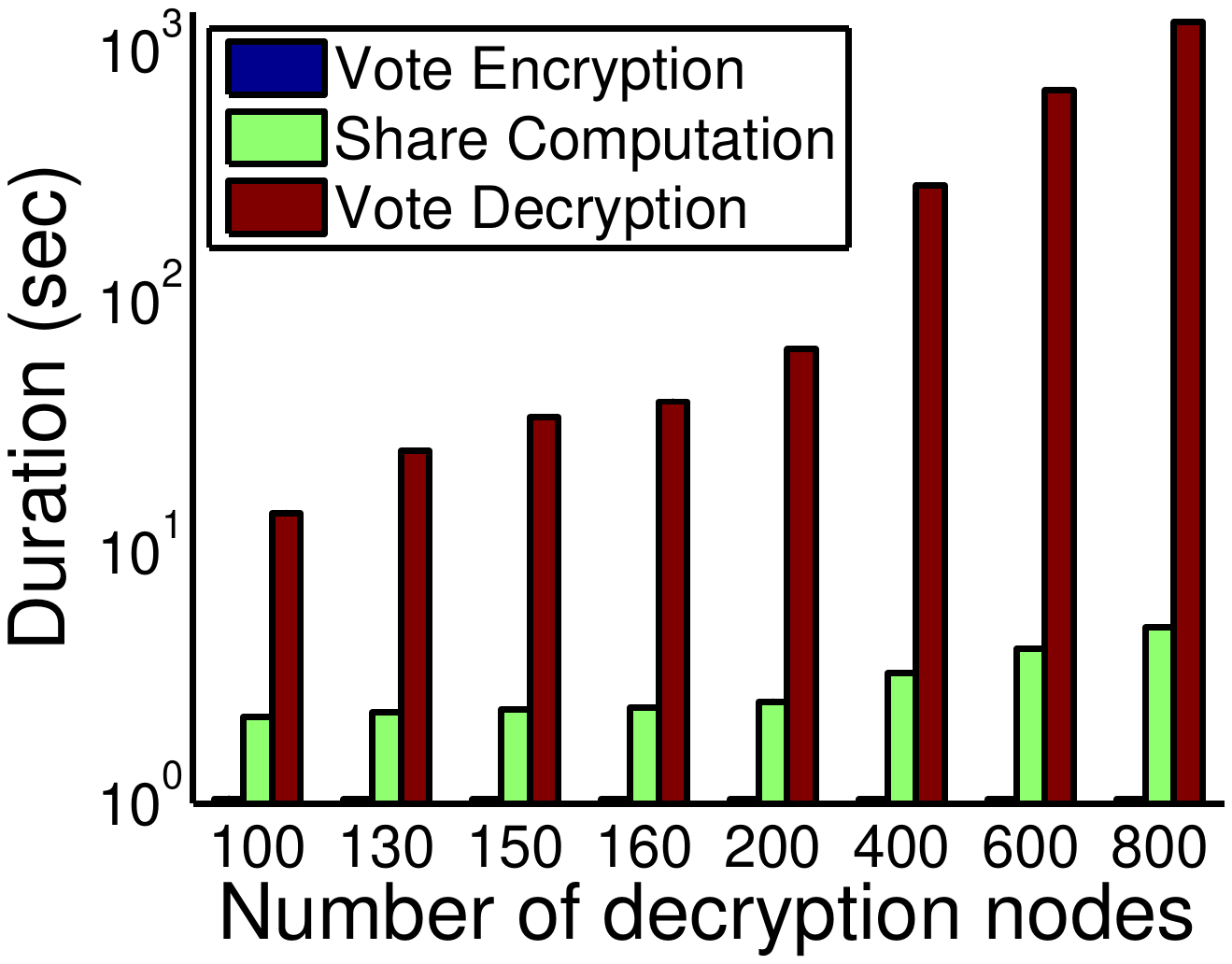}}}
\caption{Experimental evaluation of our protocol (DA) against a non-layout based one (NL).}\label{fig:experiments}
\end{center}
\end{figure}

\section{Conclusion}
\label{conclu}

In this paper, we have proposed a distributed protocol that computes aggregation functions in a distributed, scalable and secure way. The complexity of our protocol is drastically lower than those of previously known algorithms and within a factor $\log^2 n$ of the optimal. Note that using a binary tree structure instead of a ring allows to reduce the number of communication rounds from $\frac{n}{\log n}$ currently to $\log^2 n$. This can be obtained by a slight modification of the protocol for constructing the structured overlay but would not change the overall communication of the protocol. We leave as open the possibility of deriving a protocol closing the gap between the lower bound of $\Omega(n\log n)$ and the upper bound $O(n\log^3 n)$ achieved by our algorithm. We also leave as future work the possibility of adapting our protocol to achieve other multiparty computations than aggregation functions and the possibility of tolerating strictly less than $< n/2$ instead of $(1/2-\epsilon)n$.

\bibliographystyle{alpha}
\bibliography{Cryptography.bib,Hamza.bib}

\enddocument